\newtheorem{theorem}{Theorem}
\newtheorem{corollary}[theorem]{Corollary}
\newtheorem{lemma}[theorem]{Lemma}
\newtheorem{observation}[theorem]{Observation}
\theoremstyle{definition}
\newtheorem{claim}[theorem]{Claim}
\newcommand{\set}[1]{\left\{ #1 \right\}}
\newcommand{\qset}{{\mathcal{Q}}}
\newcommand{\cset}{{\mathcal{C}}}
\newcommand{\cov}{{\mathsf{cov}}}
\newcommand{\dist}{\textnormal{\textsf{dist}}}
\newcommand{\gir}{\textnormal{\textsf{girth}}}
\def\floor#1{\left\lfloor #1 \right\rfloor}
\def\card#1{\left| #1 \right|}
\newcounter{note}
\newcommand{\znote}[1]{\refstepcounter{note}$\ll${\bf ZT~\thenote:}
	{\sf \color{red} #1}$\gg$\marginpar{\tiny\bf ZT~\thenote}}
\begin{document}

\begin{titlepage}
	
	\title{An $\tilde\Omega\big(\sqrt{\log |T|}\big)$ Lower Bound for Steiner Point Removal}
	
	\author{Yu Chen\thanks{EPFL, Lausanne, Switzerland. Email: {\tt yu.chen@epfl.ch}. Supported by ERC Starting Grant 759471.} \and Zihan Tan\thanks{Rutgers University, NJ, USA. Email: {\tt zihantan1993@gmail.com}. Supported by a grant to DIMACS from the Simons Foundation (820931).}} 
	
	\maketitle

	\thispagestyle{empty}
	\begin{abstract}
		
In the Steiner point removal (SPR) problem, we are given a (weighted) graph $G$ and a subset $T$ of its vertices called terminals, and the goal is to compute a (weighted) graph $H$ on $T$ that is a minor of $G$, such that the distance between every pair of terminals is preserved to within some small multiplicative factor, that is called the \emph{stretch} of $H$.

It has been shown that on general graphs we can achieve stretch $O(\log |T|)$ [Filtser, 2018]. On the other hand, the best-known stretch lower bound is $8$ [Chan-Xia-Konjevod-Richa, 2006], which holds even for trees.
In this work, we show an improved lower bound of $\tilde\Omega\big(\sqrt{\log |T|}\big)$.

	\end{abstract}
\end{titlepage}


\section{Introduction}

In the Steiner point removal (SPR) problem, we are given an undirected weighted graph $G$ and a subset $T$ of its vertices called \emph{terminals}, and the goal is to compute a minor $H$ of $G$ with $V(H)=T$ (and arbitrary edge weights), such that for every pair $t,t'\in T$,
$$\dist_G(t, t')\leq \dist_{H}(t, t')\leq \alpha\cdot \dist_G(t, t')$$ for some small real number $\alpha\ge 1$, that is called the \emph{stretch} of the SPR solution $H$.

The SPR problem was first proposed by Gupta \cite{gupta2001steiner}, who proved that trees admit SPR solutions with stretch at most $8$, which was later shown to be tight \cite{chan2006tight}. 
Later, Kamma, Nguyen, and Krauthgamer \cite{KNZ14} showed that on general graphs we can achieve stretch $O(\log^5 |T|)$, and this was later improved by Cheung \cite{cheung2018steiner} to $O(\log^2|T|)$ and further by Filtser \cite{filtser2018steiner} to $O(\log|T|)$. 

There has also been an exciting line of work on achieving $O(1)$-stretch for special types of graphs, especially minor-free graphs. 
Basu and Gupta \cite{basu2008steiner} showed that outerplanar graphs (graphs excluding $K_{2,3}$ and $K_4$ minor) admit $O(1)$-stretch SPR solutions. 
Filtser \cite{filtser2020scattering} reduced constructing SPR solutions to computing scattering partitions. Building upon this, Hershkowitz and Li \cite{hershkowitz20211} showed that  series-parallel graphs (graphs excluding $K_4$ minor) admit SPR solutions with $O(1)$ stretch, and in a very recent sequence of breakthrough papers, Chang et al \cite{chang2023covering,chang2023resolving,chang2023shortcut} finally showed that minor-free graphs admit $O(1)$-stretch SPR solutions.

However, after all this amazing progress on the upper bound side, the best-known stretch lower bound for general graphs remains $8$, which holds even for trees \cite{chan2006tight}. 
In this work, we prove the first superconstant lower bound for general graphs. Our main result can be summarized as the following theorem.

\begin{theorem}
\label{thm: main}
For any large enough integer $k>0$, there exists a graph $G$ and a subset $T\subseteq V(G)$ of $k$ terminals, such that any SPR solution $H$ of $G$ with respect to $T$ has stretch $\Omega(\sqrt{\log k/\log\log k})$.
\end{theorem}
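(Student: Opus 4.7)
My plan has three components: a reduction from SPR to an intermediate relaxation, a construction of hard instances for the relaxation, and a level-by-level stretch analysis. First, I would introduce an intermediate relaxation of SPR that I believe corresponds to the paper's $\zesn$ notation (``\ze with a Steiner network''), in which the two tasks that SPR entangles are decoupled: assigning every non-terminal vertex to some terminal (as in classical \ze) and separately choosing a weighted graph $H$ on $T$. The reduction step shows that any SPR solution of stretch $\alpha$ yields a $\zesn$ solution of comparable quality, so that a lower bound for $\zesn$ implies Theorem~\ref{thm: main}. This direction should be relatively direct: the connected partition $\{C_t\}_{t\in T}$ underlying an SPR minor provides the terminal-assignment, and the contracted minor itself plays the role of the Steiner network.

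Second, I would design an explicit hard instance $(G,T)$ with $|T|=k$. A natural template is a layered/hierarchical graph with $\Theta(\sqrt{\log k})$ levels, where each level embeds a combinatorial conflict gadget---a random bipartite graph, a bounded-degree expander, or a high-girth structure---so that any attempted vertex-to-terminal assignment disagrees with a constant fraction of the demand pairs crossing that level. The overall stretch is then lower bounded by aggregating these mismatches across levels, in the spirit of the $\Omega(\sqrt{\log k})$ integrality-gap constructions for \ze-type LPs. Concretely, the analysis proceeds by contradiction: assuming a $\zesn$ solution of stretch $\alpha$, a counting or potential argument at each level forces an $\Omega(1)$ loss in stretch (up to a $\log\log k$ factor from union-bounding over the $\binom{k}{2}$ terminal pairs, or from a dyadic discretization of distance scales); summing over the $\Theta(\sqrt{\log k})$ levels yields $\alpha=\Omega(\sqrt{\log k}/\log\log k)$.

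The main obstacle will be the connectivity constraint that SPR (and presumably $\zesn$) imposes on the partition. Classical \ze-type lower bounds work with arbitrary assignments $f\colon V\to T$, but in our setting every preimage $f^{-1}(t)$ must induce a connected subgraph of $G$, which could a priori allow the solver to route around the conflict gadgets through distant portions of the graph. Designing the base construction so that connectivity buys at most an extra $\log\log k$ factor---i.e., so that the connectivity-constrained problem is essentially as hard as its unconstrained counterpart---will likely require strong expansion or girth properties of the underlying graph, and verifying this robustness is where I expect the bulk of the technical work to lie.
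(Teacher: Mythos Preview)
Your proposal is a research plan rather than a proof, and it diverges from the paper in construction, analysis, and framing. First, the macro $\zesn$ appears only in the preamble and is never invoked in the body; the paper introduces no intermediate relaxation, so anchoring your reduction step on it is off-track. Second, the hard instance is not layered or hierarchical: it is a single $3$-regular graph $G'$ on $k$ vertices with girth at least $(\log k)/10$, with each vertex $u$ joined to a fresh terminal $t_u$ by a pendant edge of weight $M=\lfloor\sqrt{\log k\cdot\log\log k}\rfloor$. There are no levels and no level-by-level summation; the factor $\sqrt{\log k/\log\log k}$ arises instead as a ratio: the number of $H$-edges needed to traverse a particular length-$M$ path in $G'$, times the minimum terminal-to-terminal distance $2M$, divided by the target distance $\Theta(M)$.

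The decisive structural point you miss is that the paper entirely sidesteps the minor/connectivity constraint you flag as the main obstacle. \Cref{lem: main} is strictly stronger than what SPR alone requires: it applies to \emph{any} weighted graph $H$ on $T$ satisfying only $|E(H)|\le|E(G)|$ and that each edge $(t,t')$ has length at least $\dist_G(t,t')$---no partition into connected clusters, no contraction map. Both hypotheses follow immediately from $H$ being an SPR solution (\Cref{obs: minor} and \Cref{obs: length}). The technical heart is then the covering argument of \Cref{lem:cover}: since $|\qset|\le|E(H)|=O(k)$ and each path in $\qset$ has length $O(\log k)$, a uniformly random length-$M$ walk in $G'$, chopped into segments of length $\Theta(\log\log k)$, has each segment covered by a single $\qset$-path with probability bounded away from $1$, forcing $\Omega(M/\log\log k)$ pieces overall; high girth then converts this into a distance lower bound. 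Your plan leaves the gadgets, the counting mechanism, and the treatment of connectivity all unspecified; and even if a layered construction could be made to work, the paper's route shows that the connectivity battle you anticipate need not be fought at all.
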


Our lower bound is polynomially related to the best upper bound of $O(\log k)$ on general graphs. Closing this gap remains an interesting open problem.

\section{Preliminaries}

By default, all logarithms are to the base of $2$.

Let $G=(V,E)$ be an edge-weighted graph.
For a pair $v,v'$ of vertices in $G$, we denote by $\dist_{G}(v,v')$ the shortest-path distance between $v,v'$ in $G$.
Let $P$ be a path in $G$. 
We denote by $|P|$ its length (the total weight of all its edges).
The \emph{girth} of $G$, denoted by $\gir(G)$, is defined as the minimum length of any cycle in $G$. We use the following observation.

\begin{observation}
\label{obs: girth}
Let $P$ be a shortest path in $G$ connecting a pair $u,v$ of its vertices, and let $Q$ be another $u$-$v$ path (possibly non-simple) in $G$. Then either $P\subseteq Q$, or $|P|+|Q|\ge \gir(G)$.
\end{observation}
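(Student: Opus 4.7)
The plan is to form a closed walk by concatenating $P$ with the reverse of $Q$, and then extract a cycle from it via a parity argument. Specifically, suppose $P \not\subseteq Q$, so that some edge $e$ of $P$ does not appear in $Q$. Since $P$ is a shortest path (and we may assume positive edge weights), $P$ is simple, so $e$ occurs on $P$ exactly once. Consider the closed walk $W$ at $u$ obtained by first traversing $P$ from $u$ to $v$ and then traversing $Q$ backward from $v$ to $u$; its total length is $|P|+|Q|$, and the edge $e$ is used in $W$ exactly once.

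To extract a cycle from $W$, let $S$ denote the set of edges that $W$ traverses an odd number of times. For any vertex $x$, the number of edge-traversals of $W$ incident to $x$ equals twice the number of visits of $W$ to $x$ (since $W$ is closed) and is therefore even; modulo $2$, this count equals the degree of $x$ in the subgraph $(V(G),S)$, so every vertex has even degree in $S$. Since $e \in S$, the set $S$ is non-empty, and every non-empty graph with all even degrees contains a cycle $C \subseteq S$, whose length is at least $\gir(G)$ by definition.

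Putting the pieces together, $|W| = |P|+|Q|$ is at least the total weight of the edges in $S$ (each such edge is used at least once in $W$), which in turn is at least the length $|C|$ of the cycle (since $C \subseteq S$ and edge weights are nonnegative), which is at least $\gir(G)$. I do not expect a genuine obstacle here; the only mildly delicate point is justifying that the chosen edge $e$ has \emph{odd} multiplicity in $W$, which uses the simplicity of shortest paths. Otherwise the argument is a textbook instance of the principle that the odd-multiplicity edges of a closed walk form an even subgraph and hence contain a cycle.
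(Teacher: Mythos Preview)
Your proposal is correct and takes essentially the same approach as the paper: form the closed walk $P$ followed by the reverse of $Q$, observe that the edge $e\in P\setminus Q$ survives any cancellation, and hence extract a simple cycle of length at most $|P|+|Q|$. The only cosmetic difference is that the paper phrases the cycle extraction as ``iteratively remove two copies of a repeated edge until none remain,'' whereas you phrase it as ``the odd-multiplicity edges form a non-empty even subgraph''; these are two descriptions of the same parity reduction.
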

\begin{proof}
Assume that $P\not\subseteq Q$, and we will show that $P\cup Q$ contains a cycle, which implies \Cref{obs: girth}.
Let $e$ be an edge in $P\setminus Q$.
Note that the (multi-)graph $P\cup Q$ can be viewed as a non-simple cycle (from $u$ to $v$ along $P$, and then back to $u$ along the reverse of $Q$). We iteratively remove from it two copies of an edge that appear more than once in $P\cup Q$ until we cannot do so. It is easy to show that throughout the process the graph is always the union of several disjoint cycles, and in the end the graph consists of simple cycles and is non-empty (as edge $e$ will never get removed). Therefore, $P\cup Q$ contains a cycle.
\end{proof}


\subsubsection*{Minors and image paths.}

Let $G$ be a graph. We say that a graph $H$ is a \emph{minor} of $G$, iff $H$ can be obtained from $G$ by edge deletions, vertex deletions and edge contractions. We have the following immediate observation.

\begin{observation}
\label{obs: minor}
If $H$ is a minor of $G$, then $|E(H)|\le |E(G)|$.
\end{observation}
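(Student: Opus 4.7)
The plan is to prove \Cref{obs: minor} by induction on the number of minor operations used to obtain $H$ from $G$, by showing that none of the three basic operations (edge deletion, vertex deletion, edge contraction) can increase the number of edges.

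First I would recall that by definition, $H$ is obtained from $G$ by a finite sequence of edge deletions, vertex deletions, and edge contractions. It therefore suffices to show that if $G'$ is obtained from a graph $G$ by a single such operation, then $|E(G')| \le |E(G)|$; the observation then follows by a trivial induction on the length of the operation sequence.

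Next I would handle the three cases. Edge deletion removes exactly one edge, so $|E(G')| = |E(G)| - 1$. Vertex deletion of some vertex $v$ removes $v$ together with all edges incident to $v$, so $|E(G')| = |E(G)| - \deg_G(v) \le |E(G)|$. Edge contraction of an edge $e = uv$ identifies $u$ and $v$ into a single new vertex $w$ whose incident edges are exactly the edges that were incident to $u$ or $v$ in $G$, other than $e$ itself; thus $|E(G')| \le |E(G)| - 1$ (with equality when we keep parallel edges; in the simple-graph convention the inequality may be strict, since parallel edges created by the contraction are merged into one, but in either case the count does not increase).

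There is no real obstacle here: the statement is immediate from unpacking the definition of a minor. The only tiny subtlety worth flagging is the convention on parallel edges under contraction, but since we only need the inequality $|E(H)| \le |E(G)|$ and not an exact count, this convention is irrelevant to the argument.
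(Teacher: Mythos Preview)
Your proof is correct; the paper does not actually give a proof of this observation, treating it as immediate from the definition of a minor, and your argument is exactly the natural unpacking of that definition.
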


Recall that a SPR solution $H$ of $G$ is a minor of $G$ with $V(H)=T$, and $H$ can be edge-weighted.
Let $P=(t_0,\ldots,t_r)$ be a shortest path in $H$ connecting $t_0$ to $t_r$. We define its \emph{image path} in $G$ as follows. For each index $0\le i\le r-1$, we denote by $R_i$ the $t_i$-$t_{i+1}$ shortest path in $G$. We let $R$ be the sequential concatenation of paths $R_0,\ldots,R_{r-1}$, so $R$ is a path (possibly non-simple) in $G$ connecting $t_0$ to $t_r$. 
We use the following simple observations.

\begin{observation}
\label{obs: length}
The length of edge $(t_i,t_{i+1})$ in $H$ is at least $\dist_{G}(t_i,t_{i+1})$.
\end{observation}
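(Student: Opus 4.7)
The plan is to derive this observation directly from the defining property of an SPR solution combined with the trivial fact that the weight of any single edge is an upper bound on the shortest-path distance between its endpoints.

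First I would recall that since $(t_i,t_{i+1})$ is an edge on the path $P$ in $H$, both $t_i$ and $t_{i+1}$ belong to $V(H)=T$, and the edge itself constitutes a $t_i$-$t_{i+1}$ path in $H$ of length equal to its weight. Consequently the weight of this edge is at least $\dist_H(t_i,t_{i+1})$.

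Next I would invoke the SPR requirement stated in the introduction, namely that for every pair of terminals $t,t'\in T$ one has $\dist_G(t,t')\le \dist_H(t,t')$. Applying this to the terminal pair $(t_i,t_{i+1})$ gives $\dist_H(t_i,t_{i+1})\ge \dist_G(t_i,t_{i+1})$. Chaining this with the previous inequality yields that the edge length of $(t_i,t_{i+1})$ in $H$ is at least $\dist_G(t_i,t_{i+1})$, which is exactly the claim.

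There is no real obstacle here: the statement is essentially a one-line unpacking of the definition of an SPR solution, and the only thing worth being careful about is that we distinguish between the weight of the single edge $(t_i,t_{i+1})$ in $H$ (which is what the observation is about) and the quantity $\dist_H(t_i,t_{i+1})$ (which could in principle be strictly smaller if another $t_i$-$t_{i+1}$ path in $H$ is shorter). This distinction is harmless because the chain of inequalities above only uses that the edge weight is an upper bound on $\dist_H(t_i,t_{i+1})$, and this direction is immediate.
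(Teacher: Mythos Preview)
Your proof is correct and takes essentially the same approach as the paper, which simply invokes the SPR requirement $\dist_G(t,t')\le \dist_H(t,t')$. The only cosmetic difference is that you insert the intermediate inequality ``edge weight $\ge \dist_H(t_i,t_{i+1})$'' explicitly, whereas the paper leaves this implicit (and in fact, since $P$ is a shortest path in $H$, the edge weight of $(t_i,t_{i+1})$ actually equals $\dist_H(t_i,t_{i+1})$ here).
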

\begin{proof}
This is simply because $\dist_{G}(t,t')\le \dist_{H}(t,t')$ is required for all pairs $t,t'\in T$.
\end{proof}

\begin{corollary}
\label{obs: minor dist}
$\dist_H(t_0,t_r)\ge |R|\ge \dist_G(t_0,t_r)$.
\end{corollary}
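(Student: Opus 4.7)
The plan is to prove the two inequalities in the Corollary separately, each in a single step, using the observations already established.

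For the first inequality $\dist_H(t_0,t_r)\ge |R|$, I would start from the fact that $P=(t_0,\ldots,t_r)$ is a shortest path in $H$, so $\dist_H(t_0,t_r)$ equals the sum of the $H$-weights of the edges $(t_i,t_{i+1})$ along $P$. By \Cref{obs: length}, each such edge weight is at least $\dist_G(t_i,t_{i+1})$, and this quantity is exactly the length $|R_i|$ of the shortest $t_i$-$t_{i+1}$ path in $G$ used to build the image path. Summing over $i=0,\ldots,r-1$ and using that $R$ is the concatenation of the $R_i$'s (so $|R|=\sum_{i=0}^{r-1}|R_i|$), I get $\dist_H(t_0,t_r)\ge |R|$.

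For the second inequality $|R|\ge \dist_G(t_0,t_r)$, I would just observe that $R$ is a (possibly non-simple) walk from $t_0$ to $t_r$ in $G$, and the shortest-path distance in $G$ between two vertices is by definition a lower bound on the length of any walk connecting them.

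I do not expect any real obstacle: both parts are immediate from the definitions and the preceding \Cref{obs: length}. The only minor point to be careful about is noting that $R$ may be non-simple (as was already flagged in the paragraph defining the image path), which is why one should explicitly say "walk" rather than "path" when invoking the shortest-path lower bound.
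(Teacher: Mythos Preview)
Your proposal is correct and follows essentially the same approach as the paper's own proof: both prove the first inequality by summing the edge-wise bound from \Cref{obs: length} over the shortest $t_0$-$t_r$ path in $H$, and both dispatch the second inequality by noting that $R$ connects $t_0$ to $t_r$ in $G$. Your extra care in calling $R$ a walk rather than a path is a nice touch but not a substantive difference.
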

\begin{proof}
On the one hand, from \Cref{obs: length},
\[\dist_{H}(t_0,t_r)= \sum_{0\le i\le r-1}\dist_{H}(t_i,t_{i+1})\ge \sum_{0\le i\le r-1}\dist_{G}(t_i,t_{i+1})=\sum_{0\le i\le r-1}|R_i| = |R|.\]
On the other hand, $R$ is a path in $G$ connecting $t_0$ to $t_r$, so $|R|\ge \dist_G(t_0,t_r)$.
\end{proof}

\section{Proof of \Cref{thm: main}}

Recall that we are given a large enough integer $k$.
Throughout, we set $M = \floor{\sqrt{\log k\cdot \log\log k}}$. 

We construct the hard instance as follows.
Let $G'$ be a $3$-regular graph on $k$ vertices with girth at least $(\log k)/10$ 
(such graphs are known to exist, see e.g., \cite{alon2021high}).
We then add to $G'$, for each vertex $u\in V(G')$, a new vertex $t_u$ that is connected to $u$ by a single edge $(u,t_u)$. The resulting graph is denoted by $G$. 
Each edge in $E(G')$ has weight $1$, and each new edge $(u,t_u)$ has weight $M$. The terminal set is defined to be $T=\set{t_u\mid u\in V(G')}$\footnote{We remark that our construction is similar to the ``expander with tails'' graph used in \cite{calinescu2005approximation} for analyzing the integrality gap of the semi-metric relaxation LP for the $0$-Extension problem. But our analysis here is quite different from theirs.}.


Instead of directly proving \Cref{thm: main}, we will in fact prove the following stronger lemma, which, together with \Cref{obs: minor} and \Cref{obs: length}, immediately imply that any SPR solution of the instance $(G,T)$ has stretch $\Omega(\sqrt{\log k/\log\log k})$, thereby proving \Cref{thm: main}.

\begin{lemma}
\label{lem: main}
Let $H$ be a graph on $T$ such that (i) $|E(H)|\le |E(G)|$; and (ii) every edge $(t,t')$ in $H$ has length at least $\dist_{G}(t,t')$.
Then there exists a pair $t,t'\in T$, such that \[\dist_{H}(t,t')\ge \Omega(\sqrt{\log k/\log\log k})\cdot \dist_{G}(t,t').\]
\end{lemma}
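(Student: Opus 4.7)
The plan is a proof by contradiction: suppose $H$ satisfies (i) and (ii) but every pair has stretch at most $\alpha = c\sqrt{\log k/\log\log k}$ for a small absolute constant $c>0$. I fix a pair $(t_u, t_v)$ with $d := \dist_{G'}(u, v) \le M$ and write the shortest $H$-path as $(t_u = t_{w_0}, t_{w_1}, \ldots, t_{w_r} = t_v)$. By condition (ii) its total length is at least $\sum_{i=0}^{r-1}(2M + \dist_{G'}(w_i, w_{i+1})) = 2rM + L$, where $L := \sum_{i=0}^{r-1}\dist_{G'}(w_i, w_{i+1})$. The stretch assumption combined with $d \le M$ then yields $r \le 3\alpha/2$ and $L \le (3\alpha - 2)M$.

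Next, I apply \Cref{obs: girth} to the shortest $G$-path $P^*$ between $t_u$ and $t_v$ and to the image path $R = R_0 R_1 \cdots R_{r-1}$. If $P^* \not\subseteq R$, then $|R| \ge \gir(G) - |P^*| \ge (\log k)/10 - (2M + d)$, and combined with \Cref{obs: minor dist} this already forces stretch $\Omega(\sqrt{\log k/\log\log k})$, a contradiction. So for every close pair (with $G'$-distance $\le M$) we are in the \emph{containment} case: the $G'$-walk formed by concatenating the shortest $G'$-paths between consecutive $w_i$'s must cover every edge of the shortest $G'$-path from $u$ to $v$. In particular, since $G'$ is $3$-regular with girth exceeding $2M$, the $r$-hop ball of $t_u$ in $H$ must contain all $\ge 2^M$ terminals lying within $G'$-distance $M$ of $u$.

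The last step is to derive a contradiction from these ball-containment constraints and the edge budget $|E(H)| \le 5k/2$. The plan here is to combine two opposite effects. On one hand, $\sum_v \deg_H(t_v) = 2|E(H)| \le 5k$ limits the number of ``hub'' vertices of $H$-degree at least $\log k$ to $O(k/\log k)$, and a BFS from $t_u$ that avoids such hubs grows by a factor less than $\log k$ per hop, so reaching $2^M$ terminals requires at least $M/\log\log k$ hops, giving the desired bound through $\alpha \ge 2r/3$. On the other hand, any $H$-path that does route through a hub $t_h$ must, by the containment conclusion above, keep $h$ within an $O(\alpha M) = O(\log k)$-radius $G'$-neighborhood of the shortest $G'$-path from $u$ to $v$---otherwise the detour to $h$ would inflate $L$ beyond the budget $(3\alpha-2)M$. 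This locality severely limits how many close pairs a single hub can serve, and a pigeonhole over hubs, edges, and the $\Omega(k \cdot 2^M)$ close pairs at $G'$-distance $M$ should then overshoot the edge budget.

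The main obstacle will be making this last balance go through cleanly: a naive max-degree bound alone is useless, since one very-high-degree hub could reach all of $T$ in two hops, and without the girth-driven locality constraint no such pigeonhole can rule out clever $H$-constructions. Quantifying the locality precisely enough to match the numerology---so that the $\log\log k$ factor in the final bound emerges from the ball-growth rate $\log\log k$ obtained when the hub threshold is set at $\Delta = \log k$---will be the core technical challenge.
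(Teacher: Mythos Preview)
Your setup—bounding the hop count $r \le 3\alpha/2$ from the $2M$ per-hop cost, and using \Cref{obs: girth} to force the containment $P^* \subseteq R$ for every close pair that does not already witness $\Omega(\sqrt{\log k/\log\log k})$ stretch—is correct and matches the paper. The divergence is entirely in how to finish.

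Your proposed finish (degree threshold $\Delta=\log k$, hub-avoiding BFS growth, hub locality, then a pigeonhole) has a genuine gap, which you yourself flag. The locality conclusion constrains a hub $h$'s position in $G'$ relative to the pair it serves, but it does \emph{not} bound $\deg_H(t_h)$, nor the number of close pairs whose shortest $H$-path reuses a fixed $H$-edge. A single hub with $k^{\Theta(c)}$ incident $H$-edges can serve every close pair inside its $G'$-ball of radius $O(c\log k)$ in two hops; tiling $G'$ by $\sim k^{1-\Theta(c)}$ such hubs stays within the edge budget, so the pigeonhole you gesture at does not close without a further argument about paths crossing between hub regions. I do not see a clean way to push this through, and the paper does not try.

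The paper sidesteps degrees altogether via a \emph{covering} count. Let $\qset$ be the image paths in $G$ of the $H$-edges of $G$-length at most $(\log k)/100$, so $|\qset|\le |E(H)|\le O(k)$. Containment says the shortest $G'$-path of any close pair is covered by the $\qset$-paths along its $H$-route (a long $H$-edge already forces large stretch). The key lemma is that some length-$M$ path $P$ in $G'$ has $\cov(P)=\Omega(M/\log\log k)$, proved by a first-moment argument on a \emph{random} length-$M$ path: cut it into subpaths of length $S=\lfloor 10\log\log k\rfloor$; each subpath is uniform over the $\sim k\cdot 2^{S}$ length-$S$ paths in $G'$, but at most $|\qset|\cdot(\log k)/100 = O(k\log k)$ of those can be covered by a single $\qset$-path; since $2^{S}\gg\log k$, each subpath needs at least two $\qset$-paths with probability $>0.9$, and summing gives $\mathbb{E}[\cov(P)]=\Omega(M/S)$. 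The $\log\log k$ thus emerges from the subpath length $S$, not from any degree threshold. Taking $(t_u,t_v)$ to be the terminals at the ends of such a $P$ forces $r\ge\Omega(M/\log\log k)$, each hop of $H$-length $\ge 2M$, giving $\dist_H\ge\Omega(M^2/\log\log k)$ against $\dist_G=3M$. I would replace your third paragraph with this argument; your first two paragraphs can stay essentially as written.
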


From now on we focus on proving \Cref{lem: main}.

We define $\qset$ to be the set of paths $Q$ in $G$, such that 
\begin{itemize}
\item $Q$ is the image path of a single edge $(t_u,t_v)$ in $H$; and
\item $\dist_{G}(t_u,t_v)\le \log k/100$.
\end{itemize} 
We say a path $P$ in $G$ is \emph{covered} by a collection $\qset'\subseteq \qset$ of paths, iff $P\subseteq \bigcup_{Q\in \qset'}Q$. 
For such a path $P$, we denote by $\cov(P)$ be the minimum number of paths we need in $\qset$ to cover $P$. That is,
$\cov(P)=\min\set{|\qset'|\mid \qset'\subseteq \qset, P \text{ covered by }\qset}$.
We prove the following lemma.

\begin{lemma} \label{lem:cover}
There is a path $P$ in $G'$ of length $M$ such that $\cov(P)=\Omega(M/\log \log k)$.
\end{lemma}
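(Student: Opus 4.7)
The plan is to find the desired $P$ by a probabilistic argument: I will sample a uniformly random length-$M$ non-backtracking walk $P$ in $G'$ (there are $3k\cdot 2^{M-1}$ of them, and since $M \ll \gir(G')$ each is automatically a shortest path), and show that with positive probability no single $Q\in\qset$ satisfies $|P\cap Q|\ge j$, where $j := \lceil C\log\log k\rceil$ for a suitable large constant $C$. Granted this, every $\qset'\subseteq \qset$ covering $P$ must satisfy $|\qset'|\cdot j > \sum_{Q\in\qset'} |P\cap Q| \ge |P| = M$, giving $\cov(P) > M/j = \Omega(M/\log\log k)$.

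Since $P\subseteq G'$, only the $G'$-portion of each $Q\in\qset$ matters, and its length $\ell_Q$ satisfies $\ell_Q \le \log k/100$. The first substep is to verify that $P\cap Q$ is a single (possibly empty) subpath. If it had two connected components, joining them by subpaths of $P$ and of $Q$ would give two distinct paths between a common pair of vertices whose total length is at most $|P| + \ell_Q \le M + \log k/100 < \gir(G') = \log k/10$; Observation~\ref{obs: girth} then rules this out. This validates the inequality $\cov(P) \ge M/\max_Q |P \cap Q|$ used in the previous paragraph.

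Next I bound $\Pr[|P\cap Q|\ge j]$ for each fixed $Q\in\qset$. The claim is $\Pr[|P\cap Q|\ge j] \le O(\ell_Q M / (k \cdot 2^j))$. Indeed, if $|P\cap Q|\ge j$, then $P$ contains (as a subwalk) one of the $\le 2\ell_Q$ directed length-$j$ subwalks of $Q$; and for any fixed directed length-$j$ walk $R$ in $G'$, the number of length-$M$ non-backtracking walks that contain $R$ starting at any specified position $a\in\{0,\ldots,M-j\}$ equals $2^{M-j}$, since the extensions on each side have exactly two non-backtracking choices per step (enabled by $M < \gir(G')$, which ensures no forced collisions). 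Summing over positions gives at most $2\ell_Q\cdot(M-j+1)\cdot 2^{M-j}$ pairs (directed length-$j$ subwalk of $Q$, length-$M$ walk $P$) with $R\subseteq P$, and dividing by the $3k\cdot 2^{M-1}$ total length-$M$ walks yields the claim.

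A union bound over $Q\in\qset$ now yields
\[
\Pr[\exists Q\in \qset : |P\cap Q| \ge j] \le \frac{O(M)}{k \cdot 2^j} \sum_{Q \in \qset} \ell_Q \le O\!\left(\frac{M\log k}{2^j}\right),
\]
where the last inequality uses $\sum_Q \ell_Q \le |\qset|\cdot \log k/100 \le O(k\log k)$, which in turn relies on $|\qset|\le|E(H)|\le|E(G)| = O(k)$ (Observation~\ref{obs: minor}). Choosing $C$ large enough that $j = \lceil C\log\log k\rceil$ makes the right-hand side $o(1)$ finishes the proof. The main obstacle is the walk count in the third paragraph, which depends critically on the girth bound to ensure that non-backtracking extensions cleanly contribute $2^{M-j}$ choices on each side of $R$; one also has to be careful not to over-count when $P\cap Q$ is strictly longer than $j$, but this only inflates the intermediate count by an overall factor we can absorb into the $O(\cdot)$.
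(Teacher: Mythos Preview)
Your proof is correct and takes a genuinely different (and somewhat cleaner) route than the paper.

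Both arguments sample the same uniformly random length-$M$ non-backtracking walk $P$ in $G'$, and both rely on the girth bound to ensure that each $P\cap Q$ is a single contiguous subpath. The paper then partitions $P$ into $\Theta(M/\log\log k)$ consecutive chunks of length $S=\Theta(\log\log k)$, shows each chunk is uniformly distributed over all length-$S$ paths, and argues that with probability $>0.9$ a given chunk cannot be covered by a single $Q\in\qset$; combining this with the superadditivity $\cov(P)\ge\sum_j(\cov(P_j)-1)$ and linearity of expectation yields $\mathbb{E}[\cov(P)]=\Omega(M/\log\log k)$. You bypass the partition entirely: a single union bound shows that with positive probability no $Q\in\qset$ shares $j=\Theta(\log\log k)$ consecutive edges with $P$, and then the edge-count inequality $|\qset'|\cdot j>\sum_{Q\in\qset'}|P\cap Q|\ge M$ gives $\cov(P)>M/j$ directly. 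Your approach trades the chunk decomposition and the superadditivity observation for a one-shot counting argument; the paper's version has the mild advantage of yielding an expectation bound rather than merely positive probability, but that extra strength is not used downstream.
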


\begin{proof}
Recall that $G'$ is the subgraph of $G$ without terminals.
%
We will construct a random path $P$, and then show that $\mathbb{E}[\cov(P)]=\Omega(M/\log \log k)$, which implies the lemma.

We denote $P=(u_1,\ldots,u_M)$, so each vertex $u_i$ of $P$ is a random variable. We start by pick a vertex in $G'$ uniformly at random and designate it as $u_1$, and we choose a random neighbor of this vertex in $G'$ as designate it as $u_2$. Then sequentially for each $i=2,3,\ldots,M-1$, we choose a random non-$u_{i-1}$ neighbor of $u_i$ in $G'$, and designate it as $u_{i+1}$. This completes the construction of $P$.

Denote $S=\floor{10 \log \log k}$, we partition the path $P$ into $\floor{M/S}$ subpaths $P_1,\ldots,P_{\floor{M/S}}$, where for each $1\le j< \floor{M/S}$, $P_j=(u_{(j-1)S+1},\ldots,u_{jS+1})$, and $P_{\floor{M/S}}=(u_{(\floor{M/S}-1)S+1},\ldots,u_{M})$, so each subpath $P_j$ has length at least $S$. We use the following two observations.

\begin{observation}
\label{obs: cov}
$\cov(P)\ge \sum_{1\le j\le \floor{M/S}}\big(\cov(P_j)-1\big)$.
\end{observation}
\begin{proof}
Consider a path $Q\in \qset$.
Note that both $P$ and $Q$ are simple. As $P$ contains $M$ edges and $Q$ contains at most $\log k/100$ edges, and $M+\log k/100< \gir(G')$, $P\cup Q$ may not contain any cycle, so $Q\cap P$ must be a continuous subpath of $P$.

Consider the optimal covering of $P$ in a sequence: $(Q_1,\ldots,Q_r)$, where paths are indexed according to the increasing order of the earliest intersection with $P$. Now for each $P_j$, we take the minimal subsequence $(Q_p,\ldots,Q_q)$ that covers $P_j$. It is easy to verify that the subsequences for $P_j$ and $P_{j+1}$ share at most one path $Q_q$. The observation now follows.
\end{proof}

\begin{observation}
The number of length-$S$ paths in $G'$ is $k\cdot 3\cdot 2^{S-1}$, and
for each $1\le j< \floor{M/S}$, path $P_j$ is uniformly at random chosen from them.
\end{observation}
\begin{proof}
For choosing a length-$S$ path $P$ (as our algorithm above), the first vertex has $k$ choices, the second has $3$ choices, and all the rest have $2$ choices. As $S<\gir(G)$, we cannot repeat any vertex and so will always get a simple path, and therefore the number of length-$S$ paths in $G'$ is $k\cdot 3\cdot 2^{S-1}$.

We index vertices of $G'$ by integers $1,\ldots,k$.
Consider a vertex $x$ in $G'$ whose neighbors are indexed by $y_1,y_2,y_3$, where $y_1\le y_2 \le y_3$. For each $\ell\in \set{1,2,3}$, we say that the \emph{upper sibling} of $y_\ell$ with respect to $x$ is $y_{\ell+1}$, and \emph{lower sibling} of $y_\ell$ with respect to $x$ is $y_{\ell-1}$ (here we use the convention that $3+1=1$).
From the above discussion, the path $P$ is determined by a sequence $(a,b,c_3,\ldots,c_{M})$, where $a,b\in \set{1,\ldots,k}$, representing the indices of the first and the second vertices; and for each $r\ge 3$, $c_r\in \set{\uparrow,\downarrow}$, reflecting that the $u_{r}$ is either the upper sibling of $u_{r-2}$ with respect to $u_{r-1}$ ($\uparrow$) or the lower sibling of $u_r$ with respect to $u_{r+1}$ ($\downarrow$).

As $M<\gir(G)/2$ and $G'$ is $3$-regular, the $M$-neighborhood of each vertex in $G'$ is isomorphic (to the graph where a vertex is connected to the roots of three complete binary trees). Thus, for each length-$S$ path $Q$ in $G'$, and for each $i$, the number of prefixes $(a,b,c_3,\ldots,c_{i+S})$ determining that $(u_i,\ldots,u_{i+S})=Q$ is exactly $2^{i-1}$ (the choices of $c_{i+2},\ldots,c_{i+S}$ are uniquely determined by the path $Q$, and reversely for each $c_{i+1},c_i,\ldots,c_3$, there are two choices, and in the end indices $a,b$ are uniquely determined). As the sequence $(a,b,c_3,\ldots)$ is generated uniformly at random, for each $i$, the random subpath $(u_i,\ldots,u_{i+S})$ of $P$ is equally likely to be realized as any length-$S$ path in $G'$. The observation now follows.
\end{proof}

\begin{observation}
A path in $\qset$ may cover at most $\log k/100$ length-$S$ paths in $G'$.
\end{observation}
\begin{proof}
Denote a path in $\qset$ as $(e_1,e_2,\ldots,e_r)$ with $r\le \log k/100$. If this path covers a length-$S$ path $L$, as this path is a simple path, then $L$ must be $(e_i,\ldots,e_{i+S})$ for some $1\le r\le L$, so there can be at most $r\le \log k/100$ such paths $L$.
\end{proof}

Note that $|\qset|\le |E(H)|\le |E(G')|=4k$, so the number of length-$S$ paths in $G'$ that can be covered by some path in $\qset$ is at most $4k\cdot (\log k/100)$.
Therefore, for each $1\le j< \floor{M/S}$, 
\[\Pr[\cov(P_j)\ge 2]=1-\Pr[\cov(P_j)= 1]\ge 1-\frac{4k\cdot (\log k/100)}{k\cdot 3\cdot 2^{S-1}}>0.9,\] 
as $2^{S-1}>\log k$.
Therefore, for each $1\le j< \floor{M/S}$, $\mathbb{E}[\cov(P_j)-1]\ge 0.9$, and so by \Cref{obs: cov}, 
$\mathbb{E}[\cov(P)]\ge 0.9\cdot (\floor{M/S}-1)=\Omega(M/\log\log k)$.
This completes the proof of the lemma.
\end{proof}

Let $u,v$ be the endpoints of the path $P$ given by \Cref{lem:cover}. We augment $P$ with edge $(u,t_u)$ at its beginning and edge $(v,t_v)$ at its end, to obtain a path, which we call $P'$ that connects $t_u$ to $t_v$.
As $|P'|=2M+M<\gir(G)/2$, $P'$ is the shortest path in $G$ connecting $t_u$ to $t_v$, so $\dist_{G}(t_u,t_v)=3M$.

Consider now the shortest $t_u$-$t_v$ path $Q$ in graph $H$ and its image path $R$ in $G$. 
If $P\not\subseteq R$, then from \Cref{obs: girth}, $|R|\ge \gir(G)-|P|\ge \log k/100$.
If $P\subseteq R$, then either $Q$ contains an edge whose length is at least $\log k/100$ which immediately implies $|R|\ge \log k/100$, or $Q$ only contains edges with length at most $\log k/100$, in which case from \Cref{lem:cover} we know that $|E(Q)|=\Omega(M/\log \log k)$, which, combined with the fact that the distance in $G$ between every pair of terminals is at least $2M$, implies that $|R|\ge \Omega(M\cdot (M/\log \log k))$. 
Finally, from \Cref{obs: minor dist} and $M = \floor{\sqrt{\log k\cdot \log\log k}}$,
\[
\frac{\dist_{H}(t_{u},t_{v})}{\dist_{G}(t_{u},t_{v})}\ge \frac{\min\set{\log k/100, \Omega(M^2/\log \log k) }}{3M}=\Omega(\sqrt{\log k/\log\log k}).\]

\vspace{-10pt}
\paragraph{Acknowledgement.} We would like to thank D Ellis Hershkowitz for bringing this problem to our attention. We would also like to thank the organizers of the DIMACS Workshop on Modern Techniques in Graph Algorithms at Rutgers in Summer 2023.

\bibliographystyle{alpha}
\bibliography{REF}

\end{document}